\newtheorem{theorem}{Theorem}
\newtheorem{corollary}[theorem]{Corollary}
\newtheorem{lemma}[theorem]{Lemma}
\newtheorem{definition}[theorem]{Definition}
\numberwithin{equation}{section}
\numberwithin{theorem}{section}
\newcommand{\tr}{{\operatorname{Tr}\,}}
\newcommand{\beq}[1]{\begin{equation} \label{#1}}
\newcommand{\eeq}{\end{equation}}
\newcommand{\arcosh}{\operatorname{arcosh}}
\begin{document}
\addtokomafont{author}{\raggedright}
\title{\raggedright Phase Diagram of the\\ Quantum Random Energy Model}
\author{\hspace{-.075in} Chokri Manai and Simone Warzel}
\date{\vspace{-.3in}}
\maketitle

\minisec{Abstract} We prove Goldschmidt's formula [Phys.\ Rev.\ B 47 (1990) 4858]  for the free energy of the quantum random energy model.
In particular, we verify the location of the first order and the freezing transition in the phase diagram. The proof is based on a combination of variational methods on the one hand, and 
percolation bounds on large-deviation configurations in combination with simple spectral bounds on  the hypercube's adjacency matrix on the other hand.
\bigskip
\bigskip

\section{Introduction} 

The quantum random energy model  (QREM) draws its motivation from various directions. 
In mathematical biology, it has been put forward as a simple model for mutation of genotypes in a random fitness landscape \cite{ES77,BW01}.
More recently, it gained attention as a basic testing ground of quantum annealing algorithms for searches in unstructured energy landscapes (cf.~\cite{JKrKuMa08,BFKSZ13} and references therein) as well as in the context of many-body localization \cite{LPS14,BLPS16,Bur17,FFI19,SK+19}.
Its original motivation stems from the quest of understanding quantum effects in mean-field spin glasses \cite{Gold90,TD90,ONS07,Craw07,SIC12}. 

The classical backbone, the random energy model (REM) was put forward by Derrida~\cite{Der80,Der81} in the early 1980s as the limiting and solvable case of a class of mean-field spin glasses.
The space of $ N $-bit strings $  \mathcal{Q}_N =  \{ -1, 1\}^N $  serves as the configuration space of the REM. The energy associated with 
$ \sigma = (\sigma_1 , \dots , \sigma_N) \in  \mathcal{Q}_N $ is a rescaled Gaussian random variable 
$$
U(\sigma) := \sqrt{N} \, g(\sigma) 
$$
with $ g(\sigma) $ forming an independent and identically distributed (i.i.d.) process with standard normal law.  
$ \mathcal{Q}_N $ may be interpreted as the state space of a system of $ N $ spin-$\tfrac{1}{2}$ quantum objects recorded, e.g., in the $ z $-basis. 
The corresponding Hilbert space is given by the $N$fold tensor product $ \otimes_{j=1}^N \mathbbm{C}^2$ which is unitarily equivalent to $ \ell^2( \mathcal{Q}_N) $. 
Effects of a transversal (e.g. in the negative $ x $-ditrection) constant magnetic field of strength $ \Gamma \geq 0 $ on the spins are  taken in to account through the componentwise 
flip operators $ F_j \sigma := (\sigma_1, \dots , - \sigma_j  , \dots , \sigma_N ) $, which are implemented on $ \psi \in  \ell^2( \mathcal{Q}_N)  $ as
$$ \left( T\psi\right)(\sigma) := -  \sum_{j=1}^N \psi( F_j \sigma ) .
$$
This operator coincides with the negative sum of $ x $-components of the Pauli matrices. 
The energy of the QREM is then given by an Anderson-type random matrix
\begin{equation}\label{eq:Ham}
	 H := \Gamma \ T + U 
\end{equation}
where $ U $ acts as a multiplication operator  on $ \ell^2( \mathcal{Q}_N)  $.

The process $ U(\sigma) $  is the limiting case $ p \to \infty $ of the Gaussian family of $ p $-spin models characterized by its mean 
and covariance function, 
\begin{equation}\label{eq:spinp}
\mathbb{E}\left[U(\sigma) \right] = 0 , \qquad  \mathbb{E}\left[U(\sigma) U(\sigma') \right] = N \left( N^{-1} \sum_{j=1}^N \sigma_j \sigma_j' \right)^p =: N \xi_p(\sigma,\sigma')  . 
\end{equation}
The case $ p= 2 $ corresponds to
the famous Sherrington-Kirkpatrick model. The simplifying feature of the limit $ p \to \infty $ is the lack of correlations.
The quantum $ p $-spin generalisation of the QREM is then given by the random matrix~\eqref{eq:Ham} in which $ U $ is a multiplication operator by the correlated field.\\

\subsection{Main result}
In this paper, we will be interested in thermodynamic properties of the QREM which are encoded in 
its partition fuction 
$$ Z(\beta, \Gamma) := 2^{-N}\,  \tr e^{-\beta H} $$
at inverse temperature $ \beta \in [0, \infty) $, or, equivalently,  its pressure 
\begin{equation}
p_N(\beta, \Gamma) :=  N^{-1} \; \ln Z(\beta, \Gamma) .
\end{equation}
Up to a factor of $ - \beta^{-1} $, the latter coincides with the specific free energy.

In the thermodynamic limit $ N \to \infty $ the pressure of the REM  converges almost surely \cite{Der80,Der81,Bov06},
\begin{equation}\label{eq:REMc}
\lim_{N\to \infty} p_N(\beta, 0) = p^{\mathrm{REM}}(\beta) = \left\{ \begin{array}{l@{\quad}r} \tfrac{1}{2} \beta^2 & \mbox{if } \; \beta \leq \beta_c , \\[1ex]   \tfrac{1}{2} \beta_c^2 + (\beta - \beta_c) \beta_c & \mbox{if } \;   \beta > \beta_c .\end{array} \right.
\end{equation}
It exhibits a freezing transition into a low-temperature phase characterized by the vanishing of the entropy above 
$$ \beta_c := \sqrt{2 \ln 2 } . $$
\begin{figure}[ht]
\begin{center}
\includegraphics[width=\textwidth] {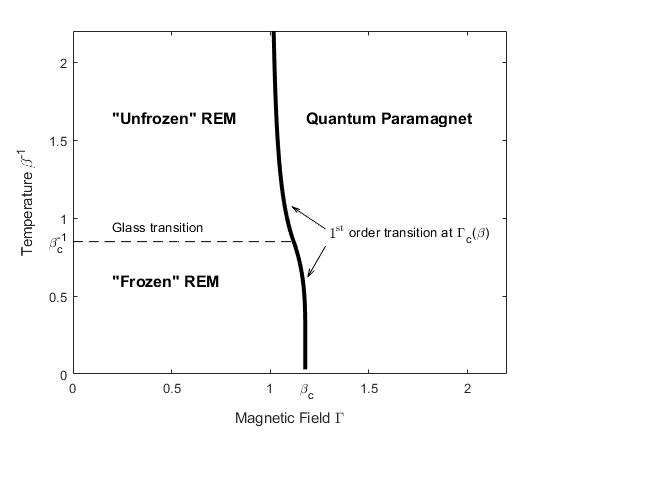}
\end{center}
\vspace*{-1cm}
\caption{Phase diagram of the QREM as a function of the transversal magnetic field $ \Gamma  $ and the temperature $ \beta^{-1}$. The first-order transition occurs at fixed $ \beta $ and $ \Gamma_c(\beta) $. The freezing transition is found at temperature $ \beta_c^{-1} $, which is unchanged in the presence of small magnetic field. } \label{fig:phase}
\end{figure}

Under the influence of the transversal field,  the spin-glass phase of the REM disappears for large $ \Gamma > 0 $ and a first-order phase transition into a quantum paramagnetic phase characterised by
$$ p^{\mathrm{PAR}}(\beta \Gamma)  := \ln \cosh\left(\beta \Gamma\right) $$
is expected to occur. The precise location of this first-order transition and the shape of the phase diagram of the QREM has been predicted by Goldschmidt~\cite{Gold90} in the 1990s on the bases of 
arguments using the replica trick and
the so-called static approximation in the associated path integral. His calculations have been repeated and refined in various papers - all still based on the replica trick and further approximations \cite{TD90,ONS07} (see also \cite{SIC12} and references). 
As a main result of this paper, we give a rigorous proof of this result. 

 \begin{theorem}\label{thm:main}
For any $ \Gamma , \beta \geq 0 $ almost surely:
$$
\lim_{N\to \infty} p_N(\beta, \Gamma) = \max\{ p^{\mathrm{REM}}(\beta)  , p^{\mathrm{PAR}}(\beta \Gamma) \} . 
$$
\end{theorem}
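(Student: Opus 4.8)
\section*{Proof proposal}

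The plan is to prove the two matching bounds $\liminf_N p_N(\beta,\Gamma)\ge\max\{p^{\mathrm{REM}}(\beta),p^{\mathrm{PAR}}(\beta\Gamma)\}$ and $\limsup_N p_N(\beta,\Gamma)\le\max\{p^{\mathrm{REM}}(\beta),p^{\mathrm{PAR}}(\beta\Gamma)\}$ almost surely. For the lower bound it suffices to dominate $p_N$ by each candidate separately, and both follow from the Peierls--Bogoliubov inequality $\tr e^{-\beta H}\ge\sum_i e^{-\beta\langle e_i,He_i\rangle}$, valid for any orthonormal basis $\{e_i\}$. Applied to the computational basis $\{\ket\sigma\}$, and since $\langle\sigma,T\sigma\rangle=0$ (the hypercube has no loops), this gives $\tr e^{-\beta H}\ge\sum_\sigma e^{-\beta U(\sigma)}$, so $p_N(\beta,\Gamma)\ge p_N(\beta,0)\to p^{\mathrm{REM}}(\beta)$ almost surely by \eqref{eq:REMc}. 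Applied instead to the product eigenbasis $\{\phi_\tau\}_{\tau\in\mathcal{Q}_N}$ of $T$ (the $\sigma^x$-eigenstates, $T\phi_\tau=-(\sum_j\tau_j)\phi_\tau$, with $|\langle\sigma,\phi_\tau\rangle|^2=2^{-N}$ for all $\sigma,\tau$), it gives $\langle\phi_\tau,H\phi_\tau\rangle=-\Gamma\sum_j\tau_j+2^{-N}\sum_\sigma U(\sigma)$; the last term has variance $N2^{-N}$ and hence vanishes almost surely, and summing $e^{-\beta\langle\phi_\tau,H\phi_\tau\rangle}$ over $\tau$ factorises over the spins, yielding $\tr e^{-\beta H}\ge e^{-o(1)}(2\cosh\beta\Gamma)^N$, i.e.\ $p_N(\beta,\Gamma)\ge p^{\mathrm{PAR}}(\beta\Gamma)-o(1)$.

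The upper bound is the substantial part. Fix $\epsilon>0$, call $\sigma$ \emph{deep} if $U(\sigma)<-\epsilon N$, and let $B$ be the set of deep configurations. As the $U(\sigma)$ are i.i.d.\ with $\pp(U(\sigma)<-\epsilon N)\le e^{-\epsilon^2N/2}$, the indicators $(\mathbbm 1[\sigma\in B])_\sigma$ form a strongly subcritical Bernoulli site percolation on $\mathcal{Q}_N$; a lattice-animal first-moment count together with Borel--Cantelli shows that almost surely, for all large $N$: (a)~every connected component of $B$ in $\mathcal{Q}_N$ has at most $k_0=k_0(\epsilon)=O(\epsilon^{-2})$ vertices, and (b)~every $\sigma\notin B$ has at most $k_0$ neighbours in $B$. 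On this event I would decouple $H$ along the edge boundary of $B$: with $\Pi_S$ the coordinate projection onto $\mathrm{span}\{\ket\sigma:\sigma\in S\}$, write $H=H'+\Gamma W$ where $H'=\Pi_BH\Pi_B\oplus\Pi_{B^c}H\Pi_{B^c}$ and $W=\Pi_BT\Pi_{B^c}+\Pi_{B^c}T\Pi_B$. Since $-T$ is the adjacency matrix of $\mathcal{Q}_N$, properties (a)--(b) bound the row sums of $W^2$ by $k_0N$, whence $\|W\|\le\sqrt{k_0N}=o(N)$, and Golden--Thompson gives
$$\tr e^{-\beta H}\le e^{\beta\Gamma\|W\|}\,\tr e^{-\beta H'}=e^{o(N)}\Big(\sum_C\tr e^{-\beta\Pi_CH\Pi_C}+\tr e^{-\beta\Pi_{B^c}H\Pi_{B^c}}\Big),$$
the sum running over the connected components $C$ of $B$.

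It remains to control the two blocks. For a component $C$ the adjacency matrix of $\mathcal{Q}_N[C]$ has norm $\le|C|-1\le k_0$, so $\Pi_CH\Pi_C\ge(\min_{\sigma\in C}U(\sigma)-\Gamma k_0)\,\mathbbm 1$ and $\tr e^{-\beta\Pi_CH\Pi_C}\le k_0e^{\beta\Gamma k_0}\max_{\sigma\in C}e^{-\beta U(\sigma)}$; summing over the disjoint components bounds the first block by $k_0e^{\beta\Gamma k_0}\sum_\sigma e^{-\beta U(\sigma)}$, whose $N^{-1}\ln$ tends to $p^{\mathrm{REM}}(\beta)$ by \eqref{eq:REMc}. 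For the second block, $U\ge-\epsilon N$ on $B^c$ gives $\Pi_{B^c}H\Pi_{B^c}\ge-\Gamma A_{B^c}-\epsilon N\,\mathbbm1$ with $A_{B^c}$ the adjacency matrix of $\mathcal{Q}_N[B^c]$ and $A:=-T$ that of $\mathcal{Q}_N$; since every closed walk in $\mathcal{Q}_N[B^c]$ is one in $\mathcal{Q}_N$ we have $\tr A_{B^c}^k\le\tr A^k$ for all $k$, hence $\tr e^{-\beta\Pi_{B^c}H\Pi_{B^c}}\le e^{\beta\epsilon N}\tr e^{\beta\Gamma A_{B^c}}\le e^{\beta\epsilon N}\tr e^{\beta\Gamma A}=e^{\beta\epsilon N}(2\cosh\beta\Gamma)^N$, of limiting pressure $\le p^{\mathrm{PAR}}(\beta\Gamma)+\beta\epsilon$. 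Combining through $\ln(a+b)\le\ln2+\max\{\ln a,\ln b\}$, taking $N^{-1}\ln$ and $N\to\infty$, then $\epsilon\downarrow0$ along a sequence, yields $\limsup_N p_N(\beta,\Gamma)\le\max\{p^{\mathrm{REM}}(\beta),p^{\mathrm{PAR}}(\beta\Gamma)\}$ almost surely, which together with the lower bound proves the theorem.

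I expect the main obstacle to be this upper-bound decoupling. The deep-configuration threshold must be chosen so that the boundary operator $W$ is simultaneously negligible at the Golden--Thompson level, which forces $\|W\|=o(N)$ (a crude Hilbert--Schmidt estimate on $W$ only gives $e^{\Theta(N)}$ and is useless), and compatible with the percolation estimates controlling \emph{both} the component sizes of $B$ and the number of deep neighbours of a bulk site; it is precisely the combination of (a) and (b) that delivers $\|W\|=O(\sqrt N)$. Making the lattice-animal bounds uniform enough to hold on a single almost-sure event, and to survive the limit $\epsilon\downarrow0$, is the remaining technical point.
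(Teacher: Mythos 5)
Your proposal is correct and takes essentially the same route as the paper: the lower bound via a Peierls--Bogoliubov/Gibbs-variational argument with the REM and $T$-eigenbasis trial states, and the upper bound by cutting out the large-deviation set $B=\mathcal{L}_\varepsilon$, applying Golden--Thompson to decouple $H$ across the boundary of $B$, controlling the boundary coupling by an $O(\sqrt{N})$ operator-norm estimate that rests on a percolation bound on the geometry of $B$, and then estimating the $B$-block by the REM partition function and the $B^c$-block by trace monotonicity of the adjacency semigroup (nonnegativity of matrix elements).

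The only deviations are technical choices that do not change the argument's substance. The paper's coupling operator $A_{\mathcal{L}_\varepsilon}$ collects \emph{all} edges touching $\mathcal{L}_\varepsilon$, so that the $\mathcal{L}_\varepsilon$-block is pure multiplication, and it clusters deep sites by distance $\le 2$ so that $A_{\mathcal{L}_\varepsilon}$ becomes block-diagonal, with each block bounded by a Frobenius estimate (Lemma~\ref{lem:opbd}). You instead keep $\Pi_B T\Pi_B$ in the $B$-block (absorbed via the trivial $|C|-1$ bound on the adjacency norm of a component), take $W$ to be only the $B$-$B^c$ coupling, and bound $\|W\|$ by the maximal row sum of $W^2$, which needs the cap on the number of deep neighbours of a bulk site rather than block-diagonality. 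Note that this row-sum bound actually uses only your property (b), not (a) --- property (a) enters only when you bound $\tr e^{-\beta\Pi_C H\Pi_C}$ by the component size, so your final remark slightly misattributes where (a) and (b) are used, but the estimates themselves are right. The paper's percolation input (Lemma~\ref{lem:goodset}) is a single ball-counting event that implies both of your (a) and (b) at once; your lattice-animal count is an equivalent way to obtain the same threshold $k_0=O(\varepsilon^{-2})$.
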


As will become clear from the proof, which is found in Section~\ref{sec:proof} below, the special structure of the pressure as a maximum of competing extremal cases is mainly caused by the fact that the REM's energy lanscape
is steep and rough due to the lack of correlations. This renders the model solvable. 
Before diving into the details of the proof, let us add some comments (see also Figure~\ref{fig:phase}): 
\begin{enumerate}
\item As in the classical case, the pressure $ p_N(\beta, \Gamma) $ is self-averaging, i.e.\  in the thermodynamic limit it coincides with its probabilistic average, the so-called quenched pressure $ \mathbb{E}\left[p_N(\beta, \Gamma) \right] $. For the QREM, this follows immediately from the Gaussian concentration inequality for Lipschitz functions. The Lipschitz constant of the pressure's variations with respect to the i.i.d.\ standard Gaussian variables $ g(\sigma) $ is bounded by
$$\displaystyle \sum_{\sigma \in \mathcal{Q}_N} \left(\frac{\partial p_N(\beta, \Gamma)}{\partial g(\sigma)} \right)^2 = \frac{\beta^2}{N \, 2^{2N} Z(\beta, \Gamma)^2 } \sum_\sigma \langle \sigma | e^{-\beta H} | \sigma \rangle^2 \leq \frac{\beta^2}{N}. $$
Here and in the following we use backet notation for matrix elements. Consequently, we have the Gaussian tail estimate 
\begin{equation}  \mathbb{P}\left(\left|  p_N(\beta, \Gamma)  - \mathbb{E}\left[p_N(\beta, \Gamma) \right]  \right|  > \frac{t \, \beta }{\sqrt{N} } \right) \leq  C \,  \exp\left(- c t^2\right)
\end{equation}
for all $ t > 0 $ and all $ N \in \mathbb{N} $ with some constants $ c, C \in (0,\infty) $. In fact, self-averaging for more general quantum $ p $-spin models has been stablished already in \cite{Craw07}.

\item For fixed $ \beta $ a first-order phase transition is found at 
$$ \Gamma_c(\beta) := \beta^{-1} \arcosh\left( \exp\left( p^{\mathrm{REM}}(\beta)\right)  \right) . $$ 
In particular, $ \Gamma_c(0) = 1  $ and $ \Gamma_c(\beta_c) = \beta_c^{-1} \arcosh(2) $. In the low-temperature limit,
$ \lim_{\beta \to \infty}  \Gamma_c(\beta) = \beta_c  $, 
the first-order transition connects to the known location of the quantum phase transition of the ground state~\cite{JKrKuMa08}.  
In this context, it is useful to recall  that 
the REM's extreme energies are almost surely found at $\| U \|_\infty = \beta_c N + o(N)$, cf.~\cite[Ch.~9]{Bov06}.
For $ \Gamma < \beta_c $, the energetically separated ground state is sharply localized near the lowest-energy configuration of the REM. 
For $ \Gamma > \beta_c $, the energetically separated ground state 
resembles the maximally delocalized state given by the ground state of $ T $. Near $ \Gamma = \beta_c $, the ground-state gap closes exponentially \cite{AW16}.
\item For $ \Gamma > \Gamma_c(\beta) $, the magnetization in the $ x $-direction is strictly positive,
$$
 \beta^{-1} \, \frac{\partial}{\partial  \Gamma} \, p^{\mathrm{PAR}}(\beta \Gamma) = \tanh(\beta J) > 0 . 
$$
\item  For all $ \Gamma < \Gamma_c(\beta) $ the line of the freezing transition transition remains unchanged at $ \beta = \beta_c $.  In the frozen regime, the QREM has zero entropy. 

\end{enumerate}

\subsection{Comments and open problems}

We close the introduction with some further comments and open problems:

\begin{enumerate}
\item For the quantum $ p $-spin model it is conjectured that  the structure of the phase diagram in Figure~\ref{fig:phase}  only changes smoothly in $ 1/p$  at low temperatures (see e.g.\  \cite{TD90} ).
Non-rigorous $ 1/p $ expansions in a replica analysis have been the basis of these assertions.
%
%

Such expansion-based arguments have been extended in~\cite{ONS07} to cover the case of ferromagnetic bias, in which the Gaussian spin-$p$ couplings are tilted towards a ferromagnetic interaction. The paper \cite{ONS07} argues that the spin glass 
phase will also disappear in favour of a ferromagnetic phase for sufficiently large tilting.

\item As in the classical case, the quenched pressure $ \mathbb{E}\left[p_N(\beta, \Gamma) \right] $ is generally smaller than the annealed pressure $ N^{-1} \ln \mathbb{E}\left[Z(\beta,\Gamma)\right] $. However, in the high-temperature phase, $ \beta < \beta_c $, asymptotic equality holds -- even in the quantum case as is not hard to show by performing the annealed average in the path-integral representation.
The fluctuation properties of the partition function are well studied in classical cases (see e.g.~\cite{ALR87,BKL02} and~\cite[Ch.~9-10]{Bov06} for further references). 
We leave it to some future work to extend these results to the quantum case. 

\item For a large class of mean-field spin glasses, the pressure in the thermodynamic limit is known to be universal in that it does not depend on the details of the randomness (cf.~\cite{Tal11} and references therein).  Such universality results have been extended to the quantum case in \cite{Craw07}.

\end{enumerate}

\section{Proof}\label{sec:proof}

The proof of Theorem~\ref{thm:main} consists of a pair of asymptotically coinciding upper and lower bounds.
\begin{proof}[Proof of Theorem~\ref{thm:main}]
The assertion is a consequence of Lemma~\ref{lem:lb} and Corollary~\ref{cor:up} below.
\end{proof}
\noindent
The following two subsections contain  the details of the argument.

\subsection{Lower bound}
Not surprisingly, our lower bound is more robust and will hold for more general $ p $-spin models also. 
Let us first recall that if $ U(\sigma) $ is a Gaussian random field of the from~\eqref{eq:spinp} with $ p \in [1,\infty] $, then its pressure 
\begin{equation}\label{eq:Parisi}
 p^{\mathrm{U}}(\beta)  := \lim_{N\to \infty } N^{-1} \ln 2^{-N} \sum_{\sigma \in \mathcal{Q}_N} e^{-\beta U(\sigma) } 
 \end{equation}
is known to converge almost surely to a non-random expression, which is in fact given by the famous Parisi formula \cite{Par80,Tal06,Tal11,Pan14}. In the special case $ p= \infty $ this reduces 
to $  p^{\mathrm{U}}(\beta) = p^{\mathrm{REM}}(\beta) $. 
\begin{lemma}\label{lem:lb}
Consider the quantum $ p $-spin model, i.e.\ $ H = \Gamma \, T + U $  with $ U $ diagonal and Gaussian of the from~\eqref{eq:spinp} with $ p \in [1,\infty] $. For any $ \Gamma , \beta \geq 0 $ and almost surely
\begin{equation}
\liminf_{N\to \infty} p_N(\beta, \Gamma) \geq  \max\{ p^{\mathrm{U}}(\beta)  , p^{\mathrm{PAR}}(\beta \Gamma) \} .
\end{equation}
\end{lemma}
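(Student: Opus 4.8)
The plan is to prove the lower bound by exhibiting two trial states (or more precisely, two lower bounds on $\tr e^{-\beta H}$), one adapted to the classical REM phase and one adapted to the paramagnetic phase, and then take the better of the two. Since $p_N(\beta,\Gamma) = N^{-1}\ln Z$ and $Z = 2^{-N}\tr e^{-\beta H}$, it suffices to bound $\tr e^{-\beta H}$ from below by $e^{N(p^{\mathrm U}(\beta)+o(1))}$ and, separately, by $e^{N(p^{\mathrm{PAR}}(\beta\Gamma)+o(1))}$, each almost surely; taking logarithms, dividing by $N$, and letting $N\to\infty$ then gives the $\liminf$ bounded below by each of the two quantities, hence by their maximum.

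\textbf{The paramagnetic bound.} The operator $\Gamma T$ is exactly the transverse-field Hamiltonian $-\Gamma\sum_j \sigma^x_j$, whose spectrum is explicit: it is a sum of commuting single-site terms, so $2^{-N}\tr e^{-\beta\Gamma T} = \prod_{j=1}^N \bigl(\tfrac12\tr e^{\beta\Gamma\sigma^x}\bigr) = \cosh(\beta\Gamma)^N$, i.e. $N^{-1}\ln 2^{-N}\tr e^{-\beta\Gamma T} = p^{\mathrm{PAR}}(\beta\Gamma)$ exactly. To pass from $\Gamma T$ to $H = \Gamma T + U$ I would use the Peierls--Bogoliubov (Gibbs) inequality $\tr e^{A+B} \geq \tr e^{A}\, \exp\bigl(\tr(e^A B)/\tr e^A\bigr)$ with $A = -\beta\Gamma T$ and $B = -\beta U$. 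Since $U$ is diagonal and the Gibbs state of $-\beta\Gamma T$ is the uniform-type product state whose diagonal entries in the $\sigma$-basis are all equal to $2^{-N}$, the correction term is $\frac{\tr(e^{-\beta\Gamma T}(-\beta U))}{\tr e^{-\beta\Gamma T}} = -\beta\, 2^{-N}\sum_\sigma U(\sigma)$, which is $O(\sqrt N\cdot 2^{-N/2})$ almost surely (or simply has mean zero and variance $\beta^2 N 2^{-N}\to 0$, so $\to 0$ a.s. along a subsequence and by Borel--Cantelli in general). Dividing by $N$ kills it, giving $\liminf p_N(\beta,\Gamma) \geq p^{\mathrm{PAR}}(\beta\Gamma)$ a.s.

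\textbf{The REM (classical) bound.} Here I would throw away the off-diagonal part: by the Peierls--Bogoliubov inequality again, or more simply by $\tr e^{-\beta H} \geq \sum_\sigma \langle\sigma|e^{-\beta H}|\sigma\rangle \geq \sum_\sigma e^{-\beta\langle\sigma|H|\sigma\rangle} = \sum_\sigma e^{-\beta U(\sigma)}$ — using Jensen for the expectation in the Gibbs state, $\langle\sigma|e^{-\beta H}|\sigma\rangle \geq e^{-\beta\langle\sigma|H|\sigma\rangle}$, and $\langle\sigma|T|\sigma\rangle = 0$. Then $N^{-1}\ln 2^{-N}\sum_\sigma e^{-\beta U(\sigma)} \to p^{\mathrm U}(\beta)$ almost surely by the Parisi formula (in the case $p=\infty$, by the elementary REM computation recalled in \eqref{eq:REMc}), giving $\liminf p_N(\beta,\Gamma) \geq p^{\mathrm U}(\beta)$ a.s. Combining the two almost-sure bounds on a common probability-one event yields the claim.

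I do not expect a serious obstacle here; the only mild technical point is the almost-sure (rather than in-probability or in-expectation) convergence, which for the paramagnetic correction term is handled by a Borel--Cantelli argument (the term is Gaussian with variance $O(N 2^{-N})$, so $\sum_N \pp(|\cdot| > \varepsilon) < \infty$), and for the classical term is exactly the known almost-sure convergence of the REM/$p$-spin pressure. One should also note the bound is stated for the general $p$-spin model, so invoking the Parisi formula's almost-sure convergence is legitimate; the hypercube structure of $T$ is used only through the single fact $\langle\sigma|T|\sigma\rangle = 0$ and the exact diagonalization of $\Gamma T$, both of which hold for any $p$.
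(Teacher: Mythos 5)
Your proof is correct and follows essentially the same route as the paper. The paper invokes the Gibbs variational principle
\begin{equation*}
\ln\tr e^{-\beta H} = -\inf_{\varrho}\left[\beta\tr(H\varrho) + \tr(\varrho\ln\varrho)\right]
\end{equation*}
with the two trial states $\varrho = e^{-\beta U}/\tr e^{-\beta U}$ and $\varrho = e^{-\beta\Gamma T}/\tr e^{-\beta\Gamma T}$; your Peierls--Bogoliubov inequality for the paramagnetic bound and your Jensen-on-diagonal-matrix-elements trick for the REM bound are exactly these two evaluations in lightly different packaging, and they produce the identical pair of intermediate estimates $\tr e^{-\beta H}\geq\tr e^{-\beta\Gamma T}\exp(-\beta 2^{-N}\sum_\sigma U(\sigma))$ and $\tr e^{-\beta H}\geq\tr e^{-\beta U}$.

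The one detail worth correcting: your claim that the correction term $2^{-N}\sum_\sigma U(\sigma)$ is Gaussian with variance $O(N2^{-N})$ is only valid for $p=\infty$, where the $U(\sigma)$ are independent. For general $p\in[1,\infty)$ the covariance structure $\mathbb{E}[U(\sigma)U(\sigma')] = N\xi_p(\sigma,\sigma')$ contributes cross terms, and the variance is instead
\begin{equation*}
\operatorname{Var}\Bigl(2^{-N}\sum_\sigma U(\sigma)\Bigr) = 2^{-2N}\sum_{\sigma,\sigma'}N\,\xi_p(\sigma,\sigma'),
\end{equation*}
which is no longer exponentially small; the crude bound $|\xi_p|\leq 1$ gives only $\leq N$. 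This is still more than enough --- after dividing by $N$ the Gaussian tail $\exp(-cN\varepsilon^2)$ remains summable in $N$, so Borel--Cantelli goes through --- and this is exactly the exponential Chebyshev estimate the paper carries out. But the constant you quoted, $N2^{-N}$, is a REM-specific artifact and should not be invoked when proving the lemma in its stated $p$-spin generality.

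Also a purely cosmetic point: your first inequality $\tr e^{-\beta H}\geq\sum_\sigma\langle\sigma|e^{-\beta H}|\sigma\rangle$ is in fact an equality, since $\{|\sigma\rangle\}$ is an orthonormal basis.
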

\begin{proof}
We use the Gibbs variational
principle, 
\begin{equation}\label{eq:Gibbs}
 \ln \tr e^{-\beta H} = -  \inf_{\varrho} \left[  \beta \, \tr \left(H \varrho \right) +  \tr\left( \varrho \ln \varrho \right) \right] 
\end{equation}
in which the infimum is taken over all density matrices, $ \varrho \geq 0 $, $ \tr \varrho = 1 $, on $ \ell^2(\mathcal{Q}_N) $. There are two natural choices:
\begin{enumerate}
\item We may pick $ \varrho = e^{-\beta U }/ \tr e^{-\beta U } $. 
In this case, the right-hand side  is lower bounded by
$$
  \ln \tr e^{-\beta U } - \beta \Gamma  \tr \left(T \,  \varrho \right) =   \ln \tr e^{-\beta U } .
$$
The last step follows from the fact that the diagonal matrix elements of $ T $ vanish. Consequently, we arrive at the  bound,
$$
p_N(\beta, \Gamma)  \geq \frac{1}{N} \ln\left( \frac{1}{2^N} \sum_{\sigma \in \mathcal{Q}_N} e^{-\beta U(\sigma) }\right) ,
$$
which together with the known convergence~\eqref{eq:Parisi} yields the first part of the claim.
\item 
We may also pick $ \varrho = e^{-\beta \Gamma T  }/ \tr e^{-\beta \Gamma T } $. In this case, the right-hand side in \eqref{eq:Gibbs} reduces to
$$
\ln  \tr e^{-\beta \Gamma T }  - \beta \,   \tr \left(U \varrho \right) = N  \ln\left( 2\cosh(\beta \Gamma) \right) - \frac{\beta }{2^N} \sum_{\sigma \in \mathcal{Q}_N} U(\sigma) , 
$$
where we used  $ \langle \sigma | e^{-\beta \Gamma T } | \sigma \rangle = \cosh(\beta \Gamma)^N $ for the diagonal matrix element of the semigroup generated by $- T$. Consequently, we arrive at the  bound,
$$
p_N(\beta, \Gamma)  \geq p^{\mathrm{PAR}}(\beta \Gamma)  -  \frac{\beta }{N 2^N} \sum_{\sigma \in \mathcal{Q}_N} U(\sigma)  . 
$$
The last term converges to zero almost surely by the strong law of large numbers. More precisely, for any $ \varepsilon > 0 $, an exponential Chebychev bound yields
\begin{align*}
\mathbb{P}\left( \frac{1 }{N 2^N} \sum_{\sigma \in \mathcal{Q}_N} U(\sigma) > \varepsilon \right) & \leq  e^{- N \varepsilon^2/2 } \, \mathbb{E}\left[ \exp\left( \frac{ \varepsilon}{2^{N+1}} \sum_{\sigma \in \mathcal{Q}_N} U(\sigma) \right) \right] \\
& =  e^{- N \varepsilon^2/2 }  \exp\left(  \frac{\varepsilon^2}{2^{2(N+1) }} \sum_{\sigma, \sigma'} N \,  \xi_p(\sigma,\sigma') \right)  \leq  e^{- N \varepsilon^2/4 } . 
\end{align*}
The same bound also applies to $ - \sum_{\sigma}  U(\sigma) $. 
Since the right-hand side is summable in $ N $, a Borel-Cantelli argument ensures the claimed almost-sure convergence.
\end{enumerate}
\end{proof}
\subsection{Upper bound}
Typical values of the REM $ U(\sigma) $ fluctuate on order $  \mathcal{O}(\sqrt{N}) $. 
Our upper bound rests on the observation that configurations on which large negative deviations occur,
\begin{equation}\label{eq:LD}
\mathcal{L}_\varepsilon := \left\{ \sigma \in  \mathcal{Q}_N  \, \big| \, U(\sigma) \leq - \varepsilon N \right\} ,
\end{equation}
do not percolate even for  $ \varepsilon > 0 $ arbitrarily small. More precisely, the size of the maximally edge-connected components remains bounded uniformly in $ N $. 
For the precise formulation of this result, it is useful to recall that the Hamming distance
$$
 d(\sigma, \sigma') := \sum_{j=1}^N 1\left[\sigma_j \neq \sigma_j' \right] 
$$
renders $ \mathcal{Q}_N $ (through the nearest-neighbor relation)  into a graph called the Hamming cube, in which each vertex has exactly $ N $ neighbors.  
For future purposes, we also introduce the Hamming ball of  radius $ r \in [0, N] $ centered at $ \sigma \in  \mathcal{Q}_N $,
$$
B_r(\sigma) := \left\{ \sigma' \in  \mathcal{Q}_N \, \big| \,  d(\sigma, \sigma') \leq r \right\} .
$$
Its volume $ |B_{r} | $ is known to be bounded by $ \exp\left( N \gamma(r/N) \right) $ for all $ r < N/2 $ in terms of the binary entropy, $ \gamma(\xi) := -\xi \ln \xi - (1-\xi) \ln(1-\xi) $. 
Here, a simpler bound is sufficient:
\begin{equation}\label{eq:boundball}
\left| B_r \right| = \sum_{j=0}^r \binom{N}{j} \leq  \sum_{j=0}^r \frac{N^j}{j!} \leq e \, N^r .
\end{equation}
. 

\begin{definition}
An edge-connected component  $   \mathcal{C}_\varepsilon \subset \mathcal{L}_\varepsilon $ is a subset
for which each pair $ \sigma, \sigma' \in  \mathcal{C}_\varepsilon $ is connected through a connected edge-path of adjacent edges. 
An edge-connected component $  \mathcal{C}_\varepsilon $ is maximal if there is no other vertex $ \sigma \in  \mathcal{L}_\varepsilon \backslash  \mathcal{C}_\varepsilon $ such that $  \mathcal{C}_\varepsilon \cup \{ \sigma \} $ 
forms an edge-connected component. 
\end{definition}

For each realisation of the randomness the large-deviation set then naturally decomposes into a finite (edge-)disjoint union of maximally edge-connected components,
$$
\mathcal{L}_\varepsilon = \bigcup_\alpha  \, \mathcal{C}_\varepsilon^{(\alpha)} . 
$$
On any edge-connected  component $  \mathcal{C}_\varepsilon  $ for every vertex $ \sigma \in \mathcal{C}_\varepsilon $ there is some $  \sigma' \in  \mathcal{C}_\varepsilon \backslash \{\sigma\}   $ with $  d(\sigma, \sigma')  \in \{ 1, 2 \}$ -- not necessarily $  d(\sigma, \sigma')  = 1 $. By construction, we thus have for all $ \alpha \neq \alpha' $:
\begin{equation}\label{eq:disjoint}
 d\left( \mathcal{C}_\varepsilon^{(\alpha)},  \mathcal{C}_\varepsilon^{(\alpha')} \right) = \min\left\{ d(\sigma, \sigma') \, | \, \sigma \in \mathcal{C}_\varepsilon^{(\alpha)} \wedge \sigma' \in \mathcal{C}_\varepsilon^{(\alpha')} \right\}  > 2 . 
 \end{equation}
The next lemma controls with good probability the size of each subset $  \mathcal{C}_\varepsilon^{(\alpha)}$, which is just the number of its vertices and denoted by $ |  \mathcal{C}_\varepsilon^{(\alpha)} | $. 
\begin{lemma}\label{lem:goodset}
For all $ \varepsilon > 0 $ and $ N \in \mathbb{N} $ there is some subset $ \Omega_{\varepsilon, N} $ of realizations such that:
\begin{enumerate}
\item for some $ c_\varepsilon > 0 $, which is independent of $ N $, and all $ N $ large enough:
$$
\mathbb{P}\left(  \Omega_{\varepsilon, N} \right) \geq 1 - e^{- c_\varepsilon N } ,
$$
\item on $ \Omega_{\varepsilon, N} $:
$ \displaystyle \, 
\max_\alpha \big|  \mathcal{C}_\varepsilon^{(\alpha)} \big| < K_\varepsilon := \left\lceil \frac{4 \ln 2}{\varepsilon^2}   \right \rceil $.
\end{enumerate}
\end{lemma}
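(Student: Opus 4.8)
The plan is a first-moment (Peierls-type) count that crucially exploits the lack of correlations in the REM. Since the variables $g(\sigma)$ are i.i.d.\ standard Gaussian, the events $\{\sigma\in\mathcal{L}_\varepsilon\}$, $\sigma\in\mathcal{Q}_N$, are independent, and the elementary subgaussian bound gives $\mathbb{P}(\sigma\in\mathcal{L}_\varepsilon)=\mathbb{P}\big(g(\sigma)\le-\varepsilon\sqrt N\big)\le e^{-\varepsilon^2 N/2}$. Hence, for any set $A\subset\mathcal{Q}_N$ of $K$ distinct configurations, $\mathbb{P}(A\subset\mathcal{L}_\varepsilon)\le e^{-K\varepsilon^2N/2}$.

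The deterministic step is the observation that a maximal component with $\big|\mathcal{C}_\varepsilon^{(\alpha)}\big|\ge K$ must contain a subset $A$ of exactly $K$ configurations that is connected with respect to the adjacency relation $1\le d(\cdot,\cdot)\le 2$ on $\mathcal{Q}_N$: start from an arbitrary vertex of $\mathcal{C}_\varepsilon^{(\alpha)}$ and enlarge greedily, at each step adding a vertex of $\mathcal{C}_\varepsilon^{(\alpha)}$ at Hamming distance $1$ or $2$ from the set built so far, which is possible as long as fewer than $\big|\mathcal{C}_\varepsilon^{(\alpha)}\big|$ vertices have been selected, since $\mathcal{C}_\varepsilon^{(\alpha)}$ is connected in this sense. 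Combining this with a union bound over all such sets $A$,
\[
\mathbb{P}\Big(\max_\alpha\big|\mathcal{C}_\varepsilon^{(\alpha)}\big|\ge K\Big)\;\le\;\mathcal{N}_K(N)\,e^{-K\varepsilon^2N/2},
\]
where $\mathcal{N}_K(N)$ denotes the number of connected (in the $d\le 2$ sense) $K$-subsets of $\mathcal{Q}_N$.

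To bound $\mathcal{N}_K(N)$ I would use that in this graph every vertex has at most $|B_2|-1\le eN^2$ neighbours by~\eqref{eq:boundball}; a standard spanning-tree / lattice-animal count then bounds the number of connected $K$-subsets containing a prescribed vertex by $C_K N^{2(K-1)}$ with $C_K$ depending only on $K$, and summing the base point over $\mathcal{Q}_N$ yields $\mathcal{N}_K(N)\le 2^N C_K N^{2(K-1)}$. Choosing $K=K_\varepsilon=\lceil 4\ln 2/\varepsilon^2\rceil$ we have $K_\varepsilon\varepsilon^2/2\ge 2\ln 2$, so
\[
\mathbb{P}\Big(\max_\alpha\big|\mathcal{C}_\varepsilon^{(\alpha)}\big|\ge K_\varepsilon\Big)\;\le\;C_{K_\varepsilon}N^{2(K_\varepsilon-1)}\,\exp\!\Big(N\big(\ln 2-\tfrac{K_\varepsilon\varepsilon^2}{2}\big)\Big)\;\le\;C_{K_\varepsilon}N^{2(K_\varepsilon-1)}\,e^{-N\ln 2}.
\]
For all $N$ large enough (depending only on $\varepsilon$) the polynomial prefactor is at most $e^{N\ln 2/2}$, whence the probability is bounded by $e^{-c_\varepsilon N}$ with, e.g., $c_\varepsilon=\tfrac12\ln 2$. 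Taking $\Omega_{\varepsilon,N}:=\{\max_\alpha\big|\mathcal{C}_\varepsilon^{(\alpha)}\big|<K_\varepsilon\}$ then gives both assertions.

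The argument is short and I do not anticipate a genuine obstacle; the only point requiring care is that the lattice-animal count grows like $N^{\mathcal{O}(K)}$, so $K$ must be taken strictly above the naive threshold $2\ln 2/\varepsilon^2$ that arises from balancing the $2^N$ choices of base point against the factor $e^{-K\varepsilon^2N/2}$ — the factor $4$ in $K_\varepsilon$ is simply a convenient safe margin ensuring that the exponential decay $e^{-N(K_\varepsilon\varepsilon^2/2-\ln 2)}$ comfortably absorbs the polynomial prefactor.
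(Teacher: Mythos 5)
Your argument is correct, and it takes a genuinely different route from the paper. The paper instead defines $\Omega_{\varepsilon,N}$ locally, as the event that every Hamming ball of radius $r_\varepsilon=4K_\varepsilon$ contains fewer than $K_\varepsilon$ large-deviation sites; deterministically, this forces any maximal component to fit inside such a ball (since starting from any $\sigma\in\mathcal{L}_\varepsilon$ and hopping with steps of Hamming length at most $2$, one cannot escape $B_{r_\varepsilon}(\sigma)$ without accumulating $K_\varepsilon$ vertices), and the probabilistic part is a union bound over centers $\sigma$ combined with a binomial tail estimate for $|B_{r_\varepsilon}(\sigma)\cap\mathcal{L}_\varepsilon|$. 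You instead run a global first-moment/Peierls count directly over connected $K_\varepsilon$-subsets of the auxiliary $d\le 2$ graph: extract (greedily, as you describe) a connected $K_\varepsilon$-subset from any oversized component, factorize by independence, and control the entropy via a standard lattice-animal bound $\mathcal{N}_K(N)\le 2^N C_K (e N^2)^{K-1}$ in a graph of maximal degree $O(N^2)$. Both arguments hinge on exactly the same tradeoff, namely that $K_\varepsilon\varepsilon^2/2$ beats $\ln 2$ with room to spare for polynomial prefactors, which is why the same threshold $K_\varepsilon=\lceil 4\ln 2/\varepsilon^2\rceil$ emerges. The paper's local-ball version has the mild advantage of producing an $\Omega_{\varepsilon,N}$ defined by explicit local conditions and avoids invoking a lattice-animal enumeration; your version is closer to the standard percolation-theoretic "no large cluster" argument and is arguably more transparent about why the components stay bounded. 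One small point of care that you implicitly handle correctly: the component notion in the paper is edge-connectivity of the edge set incident to $\mathcal{L}_\varepsilon$, which on the vertex level is precisely the $1\le d\le 2$ adjacency you use, so the reduction to counting connected $K$-subsets in that auxiliary graph is justified.
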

\begin{proof}
We start from noting that the event
\begin{equation}
\Omega_{\varepsilon, N} := \bigcap_{\sigma_\in   \mathcal{Q}_N} \left\{ \left| B_{r_\varepsilon }(\sigma) \cap  \mathcal{L}_\varepsilon  \right| < K_\varepsilon \right\} 
\end{equation}
with $ r_\varepsilon := 4K_\varepsilon $ 
implies the second assertion in the lemma. This follows from the fact that in the event $ \Omega_{\varepsilon, N} $, in which there are at most $ K_\varepsilon - 1 $ large deviation sites in the ball of 
radius $ r_\varepsilon $ around any fixed $ \sigma \in  \mathcal{L}_\varepsilon $, the edge-connected component 
to which $ \sigma $ belongs, must be strictly contained in a ball of radius at most $ 2 (K_\varepsilon - 1 ) < r_\varepsilon -2 $, i.e. it cannot edge-connect to other vertices outside the ball $B_{r_\varepsilon }(\sigma) $ and hence consists of at most $ K_\varepsilon $ vertices.

It therefore remains to estimate the probability of the event complementary to $ \Omega_{\varepsilon, N} $.
Using the union bound we obtain: 
\begin{align}
\mathbb{P}\left( \bigcup_{\sigma_\in  \mathcal{Q}_N}  \left\{ \left| B_{ r_\varepsilon}(\sigma) \cap  \mathcal{L}_\varepsilon  \right| \geq K_\varepsilon \right\}  \right) 
& \leq \! \sum_{\sigma \in \mathcal{Q}_N}  \mathbb{P}\left(  \left| B_{ r_\varepsilon}(\sigma) \cap  \mathcal{L}_\varepsilon  \right| \geq K_\varepsilon  \right) \notag \\
& \leq  \! \sum_{\sigma \in \mathcal{Q}_N}  \sum_{j= K_\varepsilon}^{|B_{ r_\varepsilon} |}  \mathbb{P}\left(  \left| B_{ r_\varepsilon}(\sigma) \cap  \mathcal{L}_\varepsilon  \right| = j  \right) \notag \\
& \leq 2^N  \sum_{j= K_\varepsilon}^{|B_{ r_\varepsilon} |}  \binom{|B_{ r_\varepsilon} |}{j} e^{- j \varepsilon^2 N /2} \leq 2^N \sum_{k=K_\varepsilon}^\infty \frac{|B_{ r_\varepsilon} |^j}{j!} e^{-j \varepsilon^2 N /2} \notag \\
& \leq 2^N \frac{|B_{ r_\varepsilon} |^{K_\varepsilon}}{K_\varepsilon!} e^{-K_\varepsilon \varepsilon^2 N /2}  \exp\left( |B_{ r_\varepsilon} | e^{- \varepsilon^2 N /2 }\right)  \notag \\
& \leq \frac{|B_{ r_\varepsilon} |^{K_\varepsilon}}{K_\varepsilon!}  e^{-K_\varepsilon \varepsilon^2 N /4} \exp\left( |B_{ r_\varepsilon} | e^{- \varepsilon^2 N /2 }\right)  . \label{eq:long}
\end{align}
Here the third line relies on the fact that the number of subsets of a given size equals the binomial coefficient. Moreover, specifying the large-deviation sites in $ B_{ r_\varepsilon}(\sigma) $ allows one to compute the probability of the event using the independence of the random field $ U(\sigma) $. To estimate this probability, we use the elementary estimate on the complementary error function, 
\begin{equation}
\mathbb{P}\left( \sigma \in   \mathcal{L}_\varepsilon \right) = \int_{-\infty}^{-\varepsilon \sqrt{N} } e^{-x^2/2} \frac{dx}{\sqrt{2\pi}} \leq e^{-\varepsilon^2 N /2} , 
\end{equation}
as well as the trivial bound on the probability of the complementary elementary event.  The last inequality in the second line of~\eqref{eq:long} results from a simple bound on the binomial coefficient. 
The forth  line is the standard estimate of the remainder of the exponential series. Finally, the last line follows by definition of $ K_\varepsilon $. 
Since the volume of the ball $ |B_{ r_\varepsilon} | $ grows only polynomially in $ N $ by~\eqref{eq:boundball}, the right-hand side of~\eqref{eq:long} is exponentially bounded for large enough $ N $. 
This completes the proof.
\end{proof}

Our main idea behind an upper bound on the partition function $  Z(\beta, \Gamma)  $ is  to decompose $ H $ into the multiplication operator $ U $ restricted to vertices 
in $ \mathcal{L}_\varepsilon $ and the QREM $ H $ restricted to the complementary set $ \mathcal{L}_\varepsilon^c $ plus a remainder term $ A_{ \mathcal{L}_\varepsilon} $.
For this purpose, we write $ \ell^2(\mathcal{Q}_N) =   \ell^2( \mathcal{L}_\varepsilon) \oplus  \ell^2( \mathcal{L}_\varepsilon^c)  $ and set $ U_{\mathcal{L}_\varepsilon} $ the 
multiplication operator by the REM values on $  \ell^2( \mathcal{L}_\varepsilon) $. On the orthogonal complement $  \ell^2( \mathcal{L}_\varepsilon^c)   $, we define
the natural restriction of \eqref{eq:Ham}. Note that $-T $ is the adjacency matrix on the Hamming cube. 
In the restriction $ H_{\mathcal{L}_\varepsilon^c} $, we simply restrict the adjacency matrix to the subgraph associated with $ \mathcal{L}_\varepsilon^c $. 
We then define $ A_{ \mathcal{L}_\varepsilon} $ through:
\begin{equation}\label{eq:decomp}
 H =: U_{\mathcal{L}_\varepsilon} \oplus H_{\mathcal{L}_\varepsilon^c} - \Gamma  A_{ \mathcal{L}_\varepsilon} . 
\end{equation}
Clearly, the matrix elements of the remainder term are related to all edges reaching $  \mathcal{L}_\varepsilon $:
\begin{equation}\label{eq:matrixel}
\langle \sigma |  A_{ \mathcal{L}_\varepsilon} | \sigma' \rangle = 
\begin{cases} 1 & \mbox{if $ \sigma \in  \mathcal{L}_\varepsilon $ or $ \sigma' \in   \mathcal{L}_\varepsilon $ and $ d(\sigma, \sigma') = 1 $,} \\
			0 & \mbox{else.}
			\end{cases} 	
\end{equation}

The following lemma contains an estimate on the operator norm of the remainder. In case the components in the decompositions are of small size, this estimate is not so wasteful.
\begin{lemma}\label{lem:opbd}
Let $ \mathcal{L}_\varepsilon = \bigcup_\alpha  \, \mathcal{C}_\varepsilon^{(\alpha)} $ stand for a finite (edge-)disjoint union of maximally edge-connected components of the large deviation set~\eqref{eq:LD}
Then 
\begin{equation}
\left\| A_{ \mathcal{L}_\varepsilon} \right\| = \sqrt{2N \,  \max_{\alpha} \big|  \mathcal{C}_\varepsilon^{(\alpha)} \big| } . 
\end{equation}
\end{lemma}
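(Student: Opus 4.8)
The plan is to exploit the block structure that $A_{\mathcal{L}_\varepsilon}$ inherits from the decomposition $\mathcal{L}_\varepsilon=\bigcup_\alpha\mathcal{C}_\varepsilon^{(\alpha)}$ into maximal edge-connected components, and then to bound the operator norm of each block by its Hilbert--Schmidt norm, which can be read off directly from the number of incident edges. Throughout one uses, as recorded in \eqref{eq:matrixel}, that $A_{\mathcal{L}_\varepsilon}$ is simply the adjacency matrix of the subgraph of the Hamming cube consisting of all edges that touch $\mathcal{L}_\varepsilon$.

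First I would enlarge each component by its outer boundary in $\mathcal{L}_\varepsilon^c$: put $V_\alpha:=\mathcal{C}_\varepsilon^{(\alpha)}\cup\{\tau\in\mathcal{L}_\varepsilon^c: d(\tau,\mathcal{C}_\varepsilon^{(\alpha)})=1\}$. The first claim is that the sets $V_\alpha$ are pairwise disjoint. Indeed, a vertex $\tau\in\mathcal{L}_\varepsilon^c$ adjacent to $\sigma\in\mathcal{C}_\varepsilon^{(\alpha)}$ and to $\sigma'\in\mathcal{C}_\varepsilon^{(\alpha')}$ would furnish an edge-path $\sigma-\tau-\sigma'$ whose two edges both touch $\mathcal{L}_\varepsilon$, so $\sigma$ and $\sigma'$ would belong to the same edge-connected component, i.e.\ $\alpha=\alpha'$; equivalently, this is excluded by the separation $d(\mathcal{C}_\varepsilon^{(\alpha)},\mathcal{C}_\varepsilon^{(\alpha')})>2$ recorded in \eqref{eq:disjoint}. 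Now by \eqref{eq:matrixel} every edge on which $A_{\mathcal{L}_\varepsilon}$ is nonzero has an endpoint in $\mathcal{L}_\varepsilon$, hence both endpoints in one and the same $V_\alpha$, while every row and column of $A_{\mathcal{L}_\varepsilon}$ labelled by $\mathcal{Q}_N\setminus\bigcup_\alpha V_\alpha$ is zero. Thus, relative to the orthogonal decomposition $\ell^2(\mathcal{Q}_N)=\bigoplus_\alpha\ell^2(V_\alpha)\oplus\ell^2(\mathcal{Q}_N\setminus\bigcup_\alpha V_\alpha)$, the operator block-diagonalizes, $A_{\mathcal{L}_\varepsilon}=\bigoplus_\alpha A^{(\alpha)}\oplus 0$ with $A^{(\alpha)}$ the restriction to $\ell^2(V_\alpha)$, and in particular $\|A_{\mathcal{L}_\varepsilon}\|=\max_\alpha\|A^{(\alpha)}\|$.

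Second, I would estimate each block. Since $A^{(\alpha)}$ is a symmetric $0/1$ matrix, $\|A^{(\alpha)}\|^2\le\|A^{(\alpha)}\|_{\mathrm{HS}}^2=\sum_{\sigma,\sigma'\in V_\alpha}\langle\sigma|A^{(\alpha)}|\sigma'\rangle^2$, which is twice the number of edges of $A^{(\alpha)}$. Each such edge is incident to a vertex of $\mathcal{C}_\varepsilon^{(\alpha)}\subset\mathcal{L}_\varepsilon$, and every vertex of $\mathcal{C}_\varepsilon^{(\alpha)}$ has exactly $N$ incident cube edges, all of which are carried by $A^{(\alpha)}$ (their other endpoint lies in $V_\alpha$ by the disjointness argument above); summing degrees over $\mathcal{C}_\varepsilon^{(\alpha)}$ counts each edge at least once, so the number of edges is at most $N|\mathcal{C}_\varepsilon^{(\alpha)}|$. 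Hence $\|A^{(\alpha)}\|^2\le 2N|\mathcal{C}_\varepsilon^{(\alpha)}|$, and combining with the first step gives $\|A_{\mathcal{L}_\varepsilon}\|\le\sqrt{2N\,\max_\alpha|\mathcal{C}_\varepsilon^{(\alpha)}|}$, which is the norm estimate of Lemma~\ref{lem:opbd} and the only input needed in the proof of Theorem~\ref{thm:main}. (In fact the inequality is strict whenever $\mathcal{L}_\varepsilon\neq\emptyset$: a single-vertex component carries the star $K_{1,N}$, whose adjacency matrix has norm $\sqrt N<\sqrt{2N}$; more generally each $A^{(\alpha)}$ is symmetric with vanishing diagonal, hence of rank at least two, so its operator norm is strictly below its Hilbert--Schmidt norm.)

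The Hilbert--Schmidt computation and the edge count are routine. The one delicate point — and the main obstacle — is the pairwise disjointness of the enlarged sets $V_\alpha$, i.e.\ that $A_{\mathcal{L}_\varepsilon}$ genuinely block-diagonalizes along the components: this is exactly what replaces the crude global bound $\sqrt{2N|\mathcal{L}_\varepsilon|}$ (useless, since $|\mathcal{L}_\varepsilon|$ is typically exponentially large in $N$) by the component-local bound $\sqrt{2N\,\max_\alpha|\mathcal{C}_\varepsilon^{(\alpha)}|}$, and it uses precisely the maximality of the edge-connected components through the separation \eqref{eq:disjoint}.
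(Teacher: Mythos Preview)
Your proof is correct and follows essentially the same route as the paper's: block-diagonalize $A_{\mathcal{L}_\varepsilon}$ along the maximal components using the separation \eqref{eq:disjoint}, then bound each block by its Frobenius (Hilbert--Schmidt) norm via the edge count $\le N|\mathcal{C}_\varepsilon^{(\alpha)}|$. Your added remark that the bound is actually strict---so the ``$=$'' in the lemma should read ``$\le$''---is also correct and worth noting.
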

\begin{proof} 
Since the components are edge-disjoint in the sense that \eqref{eq:disjoint} holds, we have 
$$ 
\left\| A_{ \mathcal{L}_\varepsilon} \right\| = \max_{\alpha}  \big\|  A_{ \mathcal{C}_\varepsilon^{(\alpha) }}  \big\| ,
$$
where the operators in the right-hand side satisfy~\eqref{eq:matrixel} with $    \mathcal{L}_\varepsilon $ substituted by $ \mathcal{C}_\varepsilon^{(\alpha) }$. 
Consequently, their operator norms are bounded by a Frobenius estimate
$$
 \big\|  A_{ \mathcal{C}_\varepsilon^{(\alpha) }}  \big\|  \leq \sqrt{\sum_{\sigma, \sigma' }  \left| \langle \sigma |  A_{\mathcal{C}_\varepsilon^{(\alpha)}}  | \sigma' \rangle \right|^2} . 
$$ 
Since the double sum is restricted to $ \sigma \in  \mathcal{C}_\varepsilon^{(\alpha) } $ or $ \sigma' \in  \mathcal{C}_\varepsilon^{(\alpha) } $ and, in each of the two cases, the other sum has at most $ N $ terms,
the assertion follows. 
\end{proof}
The fact that the operator norm in the preceding lemma does not scale with $ N $ might sound remarkable at first sight. However, we remind the reader that 
even the full adjacency matrix $ -T_{B_{N\rho}} $ restricted to a Hamming ball of radius $N\rho $ with $ \rho \in (0,1/2) $, is known \cite{FriedTill05} to be bounded by
$\big\| T_{B_{N\rho}} \big\| \leq 2 N\sqrt{\rho (1-\rho)} +o(N) $.

We are now ready to conclude our asymptotically sharp upper bound. 
\begin{corollary}\label{cor:up} 
For any $ \Gamma , \beta \geq 0 $ almost surely:
$$
\limsup_{N\to \infty}  p_N(\beta,\Gamma) \leq \max\left\{ p^{\mathrm{REM}}(\beta),   p^{\mathrm{PAR}}(\beta \Gamma)   \right\} \, . 
$$
\end{corollary}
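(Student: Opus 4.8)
The plan is to combine the decomposition~\eqref{eq:decomp} with the sub-extensive norm bound of Lemma~\ref{lem:opbd} and the non-percolation estimate of Lemma~\ref{lem:goodset}, and then to pass to the thermodynamic limit using only elementary spectral inequalities. Fix $\varepsilon>0$. On the event $\Omega_{\varepsilon,N}$ of Lemma~\ref{lem:goodset} one has, by Lemma~\ref{lem:opbd}, $\|A_{\mathcal{L}_\varepsilon}\|<\sqrt{2NK_\varepsilon}=\mathcal{O}(\sqrt N)$. Since $\Gamma A_{\mathcal{L}_\varepsilon}\le\Gamma\|A_{\mathcal{L}_\varepsilon}\|$ as operators, \eqref{eq:decomp} gives $H\ge\big(U_{\mathcal{L}_\varepsilon}\oplus H_{\mathcal{L}_\varepsilon^c}\big)-\Gamma\|A_{\mathcal{L}_\varepsilon}\|$; and because $X\ge Y$ implies $\lambda_k(X)\ge\lambda_k(Y)$ for all $k$, hence $\tr e^{-\beta X}\le\tr e^{-\beta Y}$, I obtain (using that the surviving operator is block-diagonal)
\[
\tr e^{-\beta H}\ \le\ e^{\beta\Gamma\|A_{\mathcal{L}_\varepsilon}\|}\left(\tr e^{-\beta U_{\mathcal{L}_\varepsilon}}+\tr e^{-\beta H_{\mathcal{L}_\varepsilon^c}}\right).
\]
The multiplicative error is only $\exp(\mathcal{O}(\sqrt N))$ and therefore invisible in the pressure, so the problem splits into estimating the two diagonal blocks.

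For the first block, $\tr e^{-\beta U_{\mathcal{L}_\varepsilon}}=\sum_{\sigma\in\mathcal{L}_\varepsilon}e^{-\beta U(\sigma)}\le\sum_{\sigma\in\mathcal{Q}_N}e^{-\beta U(\sigma)}$, and $N^{-1}\ln\big(2^{-N}\sum_{\sigma}e^{-\beta U(\sigma)}\big)$ converges almost surely to $p^{\mathrm{REM}}(\beta)$ by~\eqref{eq:REMc}. For the second block, the defining property~\eqref{eq:LD} of $\mathcal{L}_\varepsilon$ forces $U(\sigma)>-\varepsilon N$ on $\mathcal{L}_\varepsilon^c$, so that $H_{\mathcal{L}_\varepsilon^c}\ge\Gamma T_{\mathcal{L}_\varepsilon^c}-\varepsilon N$, where $-T_{\mathcal{L}_\varepsilon^c}$ is the adjacency matrix of the subgraph induced on $\mathcal{L}_\varepsilon^c$; hence $\tr e^{-\beta H_{\mathcal{L}_\varepsilon^c}}\le e^{\beta\varepsilon N}\,\tr e^{-\beta\Gamma T_{\mathcal{L}_\varepsilon^c}}$. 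The key point is that $-T_{\mathcal{L}_\varepsilon^c}$ is a principal submatrix of the full adjacency matrix $-T$, so Cauchy's interlacing inequality bounds each of its eigenvalues from above by the corresponding eigenvalue of $-T$; since $\beta\Gamma\ge0$ this yields
\[
\tr e^{-\beta\Gamma T_{\mathcal{L}_\varepsilon^c}}\ \le\ \tr e^{-\beta\Gamma T}\ =\ \big(2\cosh(\beta\Gamma)\big)^N,
\]
the last equality being the computation already recorded in the proof of Lemma~\ref{lem:lb}.

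Combining the three bounds and dividing by $N$, I get on $\Omega_{\varepsilon,N}$
\[
p_N(\beta,\Gamma)\ \le\ \beta\Gamma\sqrt{\tfrac{2K_\varepsilon}{N}}+\frac{\ln 2}{N}+\max\Big\{\tfrac1N\ln\big(2^{-N}\!\!\sum_{\sigma}e^{-\beta U(\sigma)}\big),\ \ \beta\varepsilon+p^{\mathrm{PAR}}(\beta\Gamma)\Big\}.
\]
By Lemma~\ref{lem:goodset}(i), $\sum_N\pp(\Omega_{\varepsilon,N}^c)<\infty$, so Borel--Cantelli makes this valid for all large $N$ almost surely; letting $N\to\infty$ and using~\eqref{eq:REMc} gives $\limsup_N p_N(\beta,\Gamma)\le\max\{p^{\mathrm{REM}}(\beta),\ \beta\varepsilon+p^{\mathrm{PAR}}(\beta\Gamma)\}$ almost surely. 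As this holds for each fixed $\varepsilon>0$, intersecting the corresponding full-measure events over a sequence $\varepsilon\downarrow0$ and using continuity of $\Gamma\mapsto p^{\mathrm{PAR}}(\beta\Gamma)$ removes the $\beta\varepsilon$ term and completes the proof.

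The substantive input — that $\mathcal{L}_\varepsilon$ does not percolate, so that the operator $A_{\mathcal{L}_\varepsilon}$ created by peeling off the deep wells has norm $\mathcal{O}(\sqrt N)$ rather than $\mathcal{O}(N)$ — is precisely what Lemmas~\ref{lem:goodset} and~\ref{lem:opbd} supply; conditional on them this corollary is a short assembly. The one step that must be handled with care rather than crudely is the bound on $\tr e^{-\beta\Gamma T_{\mathcal{L}_\varepsilon^c}}$: estimating it by $|\mathcal{L}_\varepsilon^c|\,e^{\beta\Gamma\|T_{\mathcal{L}_\varepsilon^c}\|}$ would cost an additive $\ln 2$ per spin and overshoot $p^{\mathrm{PAR}}(\beta\Gamma)$, so one genuinely needs the interlacing comparison (equivalently, a pinching/operator-Jensen argument) between the restricted hypercube and the full one. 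Everything else is bookkeeping of almost-sure limits.
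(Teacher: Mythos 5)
Your proof is correct and reaches the same conclusion by the same overall strategy (peel off the deep wells via the decomposition~\eqref{eq:decomp}, control the boundary term via Lemmas~\ref{lem:goodset} and~\ref{lem:opbd}, and compare the two surviving blocks to the REM and to the full transverse-field operator), but it differs from the paper's argument in two technical steps. First, where the paper applies the Golden--Thompson inequality to split off $e^{-\beta\Gamma A_{\mathcal L_\varepsilon}}$, you instead use the operator inequality $-\Gamma A_{\mathcal L_\varepsilon}\ge -\Gamma\|A_{\mathcal L_\varepsilon}\|\mathbbm 1$ together with the monotonicity of eigenvalues under the Loewner order (Weyl's inequality) to conclude $\tr e^{-\beta H}\le e^{\beta\Gamma\|A_{\mathcal L_\varepsilon}\|}\tr e^{-\beta(U_{\mathcal L_\varepsilon}\oplus H_{\mathcal L_\varepsilon^c})}$; this is a strictly more elementary route and actually also fixes a small typo in the paper's display (the missing factor $\Gamma$ in the exponent). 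Second, for the comparison $\tr e^{-\beta\Gamma T_{\mathcal L_\varepsilon^c}}\le\tr e^{-\beta\Gamma T}$ the paper invokes positivity preservation of the semigroup generated by the adjacency matrix together with ``monotonicity of the trace in the domain,'' whereas you appeal to Cauchy interlacing for the principal submatrix $T_{\mathcal L_\varepsilon^c}$ of $T$. Both arguments are valid; interlacing is purely spectral and does not use the sign structure of $T$, whereas the positivity-preservation route is the one that would generalize more naturally to, say, Feynman--Kac manipulations. The remaining bookkeeping (Borel--Cantelli over the summable exceptional probabilities, passing to the a.s.\ limit $p_N(\beta,0)\to p^{\mathrm{REM}}(\beta)$, and finally intersecting full-measure events along a sequence $\varepsilon_k\downarrow 0$ to remove the $\beta\varepsilon$ term) is handled correctly and matches the paper. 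One minor nit: in your closing remark the claimed ``$\ln 2$ per spin'' cost of the crude bound $\lvert\mathcal L_\varepsilon^c\rvert\,e^{\beta\Gamma\|T_{\mathcal L_\varepsilon^c}\|}$ is not quite right once the $2^{-N}$ normalization is kept -- the crude bound yields pressure $\beta\Gamma$ rather than $\ln 2+\beta\Gamma$ -- but the point that it overshoots $\ln\cosh(\beta\Gamma)$ stands, and this is a side comment not affecting the proof.
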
 
\begin{proof}
We pick $ \varepsilon > 0 $ arbitrarily small and start from the decomposition~\eqref{eq:decomp} of the Hamiltonian. The Golden-Thompson inequality yields
\begin{align}
Z(\beta, \Gamma) & \leq  2^{-N}\,  \tr e^{-\beta U_{\mathcal{L}_\varepsilon} \oplus H_{\mathcal{L}_\varepsilon^c} } \, e^{ -\beta A_{ \mathcal{L}_\varepsilon}}  \notag \\
	& \leq2^{-N}\,  e^{ \beta \| A_{ \mathcal{L}_\varepsilon} \|  }  \left(  \tr_{\ell^2(\mathcal{L}_\varepsilon)} e^{-\beta U_{\mathcal{L}_\varepsilon} } +  \tr_{\ell^2(\mathcal{L}_\varepsilon^c)}  e^{-\beta H_{\mathcal{L}_\varepsilon^c}} \right)  . \notag 
\end{align}
The first term in the bracket on the right-hand side is trivially estimated in terms of the partition function of the REM:
$$
2^{-N}\,   \tr_{\ell^2(\mathcal{L}_\varepsilon)} e^{-\beta U_{\mathcal{L}_\varepsilon} } \leq Z(\beta,0) = e^{N p_N(\beta,0) } . 
$$
For the second term we use the fact that the adjacency matrix $ - T_{\mathcal{L}_\varepsilon^c} $ has non-negative matrix elements and hence generates a positivity preserving semigroup on $ \ell^2(\mathcal{L}_\varepsilon^c)$.  Since the diagonal values 
of its perturbation are bounded from below by $ - \varepsilon N $ by assumption on $ \mathcal{L}_\varepsilon^c $, we conclude
\begin{align}
2^{-N}\, \tr_{\ell^2(\mathcal{L}_\varepsilon^c)} e^{-\beta H_{\mathcal{L}_\varepsilon^c}} & \leq e^{\beta \varepsilon N} 2^{-N}\, \tr_{\ell^2(\mathcal{L}_\varepsilon^c)} e^{-\beta J T_{\mathcal{L}_\varepsilon^c} } \notag \\
& \leq e^{\beta \varepsilon N} 2^{-N}\, \tr e^{-\beta J T} =  \exp\left( N \left( \beta \varepsilon +  p^{\mathrm{PAR}}(\beta \Gamma)  \right)  \right) . \notag 
\end{align}
Here the last inequality is the monotonicity of the trace in the domain, which is a consequence of the non-negativity of the matrix elements of the adjacency matrix. 
To summarize, we thus obtain
\begin{equation}
 p_N(\beta,\Gamma) \leq \max\left\{ p_N(\beta,0) , \beta \varepsilon +  p^{\mathrm{PAR}}(\beta \Gamma)   \right\} + \tfrac{ 1}{N} \left( \beta \,  \| A_{ \mathcal{L}_\varepsilon} \| + \ln 2 \right) .
\end{equation}
According to Lemma~\ref{lem:goodset} there is some $ \Omega_{\varepsilon , N} $ whose complementary probability is exponentially small in $ N $ and on which Lemma~\ref{lem:opbd} guarantees that  for all $ N $ large enough:
$$ p_N(\beta,\Gamma) \leq \max\left\{ p_N(\beta,0) , p^{\mathrm{PAR}}(\beta \Gamma)   \right\}  + 2  \beta \varepsilon \, . $$
Since the probabilities of the complementary event are summable in $ N $, a Borel-Cantelli argument together with the known almost sure convergence~\eqref{eq:REMc} of the REM thus finishes the proof.
\end{proof}

\minisec{Acknowledgments}
This work was supported by the DFG under EXC-2111 -- 390814868. 


\bigskip
\bigskip
\begin{minipage}{0.5\linewidth}
\noindent Chokri Manai and Simone Warzel\\
MCQST \& Zentrum Mathematik \\
Technische Universit\"{a}t M\"{u}nchen\\
Corresponding author: \verb+warzel@ma.tum.de+
\end{minipage}
\end{document}